\newtheorem{theorem}{Theorem}[section]
\theoremstyle{remark}
\newtheorem{remark}{Remark}
\begin{document}

\title{Analysis of a platooned car-following model with different inter-platoon communication levels}

\author{Shouwei Hui \thanks{Shouwei Hui is with the Department
of Mathematics, Univercity of California Davis, Davis,
CA, 95616 USA e-mail: (huihui@ucdavis.edu).}, Michael Zhang \thanks{Michael Zhang is with the Department
of Civil and Environmental Engineering, Univercity of California Davis, Davis,
CA, 95616 USA e-mail: (hmzhang@ucdavis.edu)} \thanks{Corresponding author.}}
\date{}

\maketitle

\begin{abstract}
Despite growing interest in vehicle platooning research, the effect of communication capability between platoons is not investigated to a depth of depth. In this paper, we extend a single-platoon car-following (CF) model to multi-platoon CF models for connected and autonomous vehicles (CAVs) with different inter-platoon communication capabilities. Specifically, we consider forward and backward connection availabilities with delays between platoons. Using linear stability analysis, we discovered that for identical platoons, stability increases with platoon size and connection availabilities and decreases exponentially with large delay. With maximum acceleration and emergency braking constraints integrated into the models, we performed simulations for various cases of CAV platoons and mixed autonomy with human-driven vehicles (HDVs). The simulation results for CAV platoons are consistent with theoretical analysis. The mixed autonomy experiments demonstrate that in the ring road scenario, CAV platoons can stabilize HDVs without adaptions, and the effect of distribution is marginal. Overall, this paper provides valuable insights for designing vehicle-to-vehicle (V2V) communications and managing mixed traffic scenarios.
\\\\
Keywords: CAV platoon, communication capability, linear stability, numerical simulation
\end{abstract}

\section{Introduction}

Platooning is a coordinated driving strategy that keeps a short distance between vehicles, which can enhance capacity, safety and flow rate of traffic networks. It can be conceptualized as a longitudinal traffic control system \cite{sheikholeslam1990longitudinal}. With the recent advancements in connected and autonomous vehicles (CAVs) and associated communication technologies, the era of mixed autonomy, where CAV platoons and human-driven vehicles (HDVs) share the road, is fast approaching. To model such mixed-autonomy environments, one common choice is microscopic traffic flow models, e.g. car-following (CF) models, which provide a useful framework for researchers to study the interactions and dynamics between CAVs and HDVs under varying traffic conditions.

Car-following (CF) models have been fundamental to traffic flow theory since Pipes introduced the first operational model in 1953 \cite{pipes1953operational}. Following this, several CF models were developed to incorporate more realistic dynamics. Notably, the Optimal Velocity Model (OVM) proposed by Bando et al. \cite{bando1995dynamical, bando1995phenomenological} replaced the leading vehicle's velocity in Pipes' model with an optimal velocity function. The Intelligent Driver Model (IDM), introduced by Treiber et al. \cite{treiber2000congested}, further improved upon these models by considering the velocity difference between the lead and following vehicles. These early-stage CF models are widely used in traffic simulations and control design, since they are capable of describing typical traffic phenomena with relatively simple forms. Several extensions have been proposed for these models, e.g.  \cite{jiang2001full}, \cite{yu2013full}, and \cite{derbel2013modified}. However, these models limits the interaction between a pair of vehicles in a leader-follower configuration. Extensions are necessary for these models to describe more complicated traffic scenarios involving CAVs.

The next stage of CF models includes multi-vehicle interaction, moving beyond the classic models considering only the immediate leader, thus modeling traffic flow more realistically.CF patterns such as multi-following \cite{lenz1999multi} and backward-following \cite{nakayama2001effect}, both based on OVM, have been increasingly explored and have inspired subsequent research on mixed traffic flow \cite{zhu2018analysis}, \cite{wang2021leading}, where CAVs are designed to better stabilize traffic. In fully autonomous traffic scenarios, various platooning designs can be integrated to enhance both efficiency and safety \cite{jia2016platoon, sun2020relationship, zhang2021internet, zhou2023autonomous, zhou2024self, zong2025platoon}. Further improvements to CF models include the addition of delay factors \cite{davis2003modifications}, \cite{treiber2006delays}, as well as the integration of control mechanisms such as safety-prioritized control \cite{zhao2023safety, zhao2024leveraging} and feedback control \cite{jin2020dynamical}. These extended CF models, coupled with advanced control strategies, reflect the growing trend toward improving traffic stability, efficiency, and safety. By accounting for multi-vehicle interactions and integrating autonomous systems, these models contribute significantly to enhancing traffic performance, particularly in mixed autonomy scenarios.

Control strategies for groups of CAVs, particularly Adaptive Cruise Control (ACC) and Cooperative Adaptive Cruise Control (CACC), have been extensively studied to improve vehicle platooning efficiency and safety \cite{vahidi2003research, milanes2013cooperative}. ACC primarily focuses on maintaining safe distances between vehicles by adjusting speed based on sensor data, achieving better results than typical human drivers. However, it operates in a decentralized manner without relying on vehicle-to-vehicle (V2V) communication. On the other hand, CACC utilizes V2V communication to enable more precise control and coordination among CAV platoons \cite{zheng2015stability, sabuau2016optimal, besselink2017string}. Additionally, the quality of communications plays a crucial role in the stability of CAV platoons, as shown in studies on robust communication and stability analysis \cite{gao2016robust, yu2023assessment, yu2023stability, wang2024stability}. These studies highlight the advantages of CACC over ACC, particularly regarding communication and coordination within platoons.

Beyond ACC and CACC, advanced control strategies such as Model Predictive Control (MPC), reinforcement learning (RL), and stochastic optimization have been explored to further enhance the performance of CAV platoons in complex traffic scenarios. MPC has been widely used due to its ability to predict future states and optimize control actions over a finite horizon \cite{zhou2019distributed, graffione2020model, li2022variable, li2024improved}. Meanwhile, RL has increasingly gained attention for its ability to adaptively learn control policies from data, allowing for more flexible and autonomous decision-making in dynamic environments \cite{liu2022autonomous}. Stochastic optimization approaches have also been applied to account for uncertainties in traffic behavior and communication, as seen in the work of Li \cite{li2017stochastic, li2017stochastic2}. 

Field experiments can further demonstrate the effectiveness of CAVs and platooning in increasing traffic stability and reducing fuel consumption. Various configurations, such as a single CAV leading multiple HDVs on a ring road \cite{stern2018dissipation}, three trucks on a test track \cite{tsugawa2011automated}, and 100 CAVs on a freeway network \cite{lee2024traffic}, have been explored by researchers to highlight the benefits of integrating CAVs and platooning into traffic systems. The data generated from such field experiments can also provide valuable insights for theoretical research, as demonstrated in studies such as \cite{zheng2023comparison, zhou2023experimental}. These studies underscore the potential of advanced control techniques to further enhance the performance and adaptability of CAVs and platooning, especially in mixed autonomy settings.

While car-following and platooning models have been widely studied, most existing studies focus on a single platoon or assume fixed connectivity patterns between platoons, with limited investigation into multi-platoon stability under different inter-platoon connectivity structures. Furthermore, studies on mixed traffic flow involving CAVs and HDVs typically rely on specific control strategies applied to CAVs to improve stability, which often requires processing information from multiple surrounding vehicles. However, the inherent stabilizing effects of structured platooning without complex control mechanisms remain underinvestigated.   

To address these research gaps, this paper proposes a multi-platoon framework that generalizes car-following models to different inter-platoon connectivity structures. We extend a recently proposed single-platoon CF model \cite{hui2024new} to a multi-platoon framework, incorporating varying connection structures and inter-platoon communication delays. Notably, when the platoon size is set uniformly to one, the model degenerates to a classic CF model applicable for HDVs or AVs with backward detection. Theoretically, we proved that the stability of the multi-platoon models depends on the platoon sizes and communication capabilities: larger platoon sizes and enhanced connectivity contribute to increased stability, whereas the negative impact of delays between platoons become significant when they are sufficiently large. Furthermore, we conduct numerical simulations on a ring road, testing different vehicle arrangements, including various CAV platoon sizes and mixed traffic scenarios with different CAV-to-HDV ratios and distributions. The results validate our theoretical findings, and demonstrate that HDVs benefit from following CAV platoons, even when the CAV platoons are not specially designed to control the HDVs.

The remaining part of this paper is organized as follows. In section \ref{sec22}, we introduce the CF models for single and multiple platoons of CAVs. In section \ref{sec23}, the stability criteria of the proposed models are presented and proved. In section \ref{sec24}, we perform numerical simulations for the proposed models with various traffic assignments on a ring road. The impact of delay and connectivity are analyzed. Lastly in section \ref{sec25}, conclusion and possible extensions are given.

\section{Models for CAV platoons \label{sec22}}

\subsection{General assumptions}
We assume that there are $m$ CAV platoons on a single lane road with no overtaking allowed, where $m\geq 1$. The $m$-th platoon is the leading platoon, and $N_i$ denotes the size of the $i$-th platoon. Within the $i$-th platoon the $N_i$-th car is the leading vehicle. Moreover, if the single lane road is a ring road, the $m$-th platoon is following the $1$st platoon.

We select the commonly used Optimal Velocity Model (OVM) \cite{bando1995dynamical} as the base CF model for human driven vehicles (HDVs). The OVM is expressed as
\begin{equation}\label{ch3_OVM1}
    \ddot{x}_i(t)=a\left[V(x_{i+1}(t)-x_{i}(t))-\dot{x}_i(t)\right],
\end{equation}
where $x_i(t)$, $i=1,2,\dots N$ is the position of $i$-th vehicle at time $t$. $x_{i+1}(t)-x_{i}(t)\triangleq h_i(t)$ represents the spatial headway between the $i$-th and $i+1$-th vehicle. $\dot{x}_i(t)$, $\ddot{x}_i(t)$ denotes the velocity and acceleration of the $i$-th vehicle at time $t$, respectively. $V(h)$ is the optimal velocity function of headway (head to head distance) $h$, and $a$ is a sensitivity constant. An example of optimal velocity function is as follows:
\begin{equation}\label{ch3_ovf1}
    V(h)=\begin{cases}
        v_{f}, & \text{if } h\geq h_{f}; \\
        \frac{v_{f}}{2}\left(1-\cos\left(\pi\frac{h-h_{s}}{h_{f}-h_{s}}\right)\right), & \text{if } h_{s}\leq h \leq h_{f};\\
        0, & \text{if } h \leq h_{s},
    \end{cases}
\end{equation}
where $h_{s}$ is the standstill headway, $h_{f}$ is the free flow headway, $v_{f}$ is the free flow speed and $l$ is the length of each vehicle. This is equivalent to the function in \cite{wang2021leading}. Figure \ref{ch3_f1} is an example plot of (\ref{ch3_ovf1}) and the corresponding fundamental diagram (density-flow diagram) as in \cite{hui2024new}.
\begin{figure}[ht]
    \centering
    \subfloat[Optimal velocity function]{\includegraphics[width=0.35\textwidth]{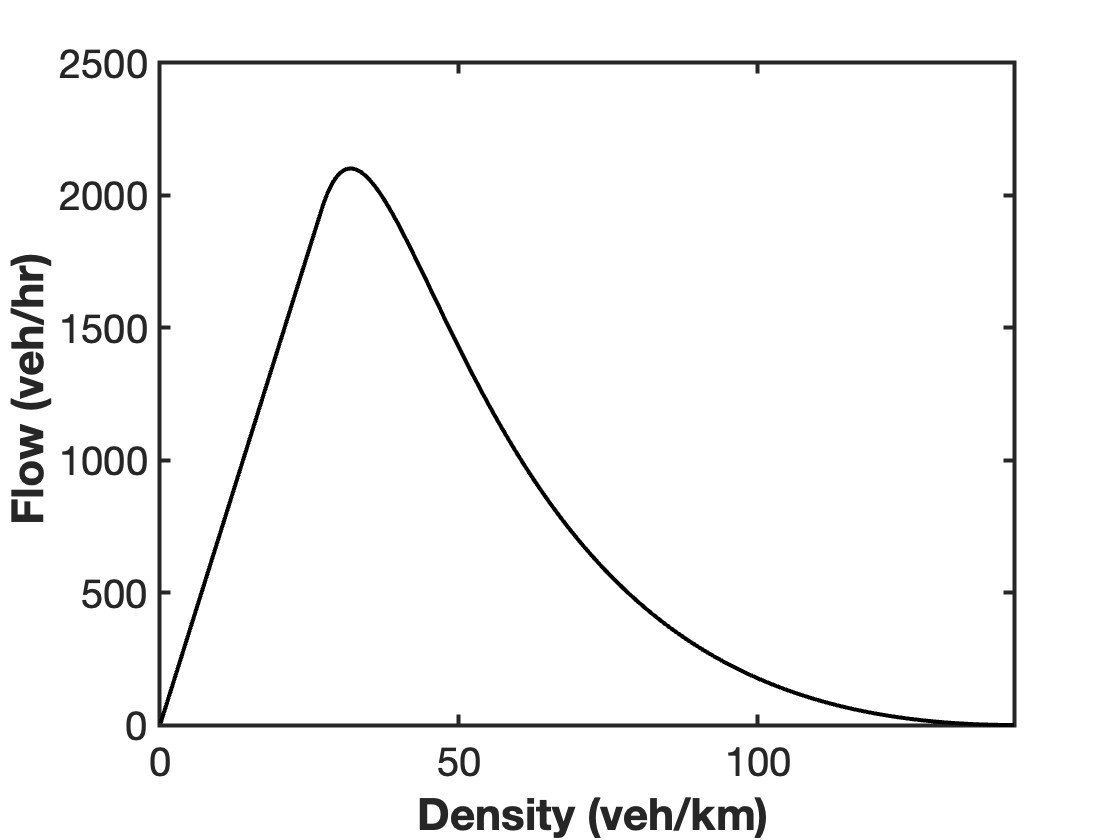}}\label{f1-1}
    \subfloat[Fundamental diagram]{\includegraphics[width=0.35\textwidth]{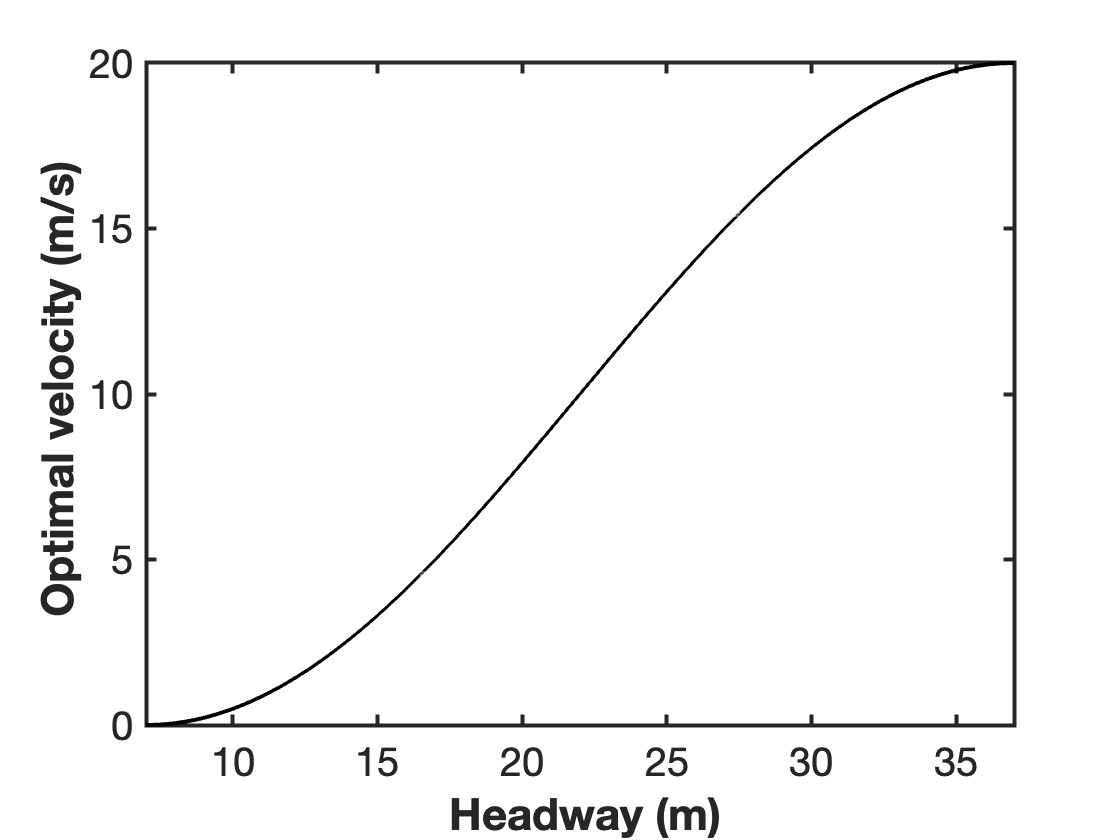}}\label{f1-2}
    \caption{Plot of an optimal velocity function and the corresponding fundamental diagram.}
    \label{ch3_f1}
\end{figure}

\subsection{Single platoon: base model}
Before investigating multi-platoon models, it is essential to develop a robust single-platoon CF model. In \cite{hui2024new}, it is shown that if a platoon is sufficiently close to its equilibrium state, the platoon controlled OVM (P-OVM) is always stable under small initial disturbances and periodic disturbances. The proposed model is of the form
\begin{equation}
    \ddot{x}_i(t)=a\left[V(\frac{x_{N}(t)-x_{i}(t)}{N-i})-\dot{x}_i(t)\right],\;i=1,2,\dots, N\label{ch3_povm}
\end{equation}
where $N$ represents the platoon size, and $x_N$ is the position of the controlled leading vehicle. Figure \ref{fig:onepla} is a visual interpretation of this model. 
\begin{figure}[ht]
    \centering
    \includegraphics[width=0.9\linewidth]{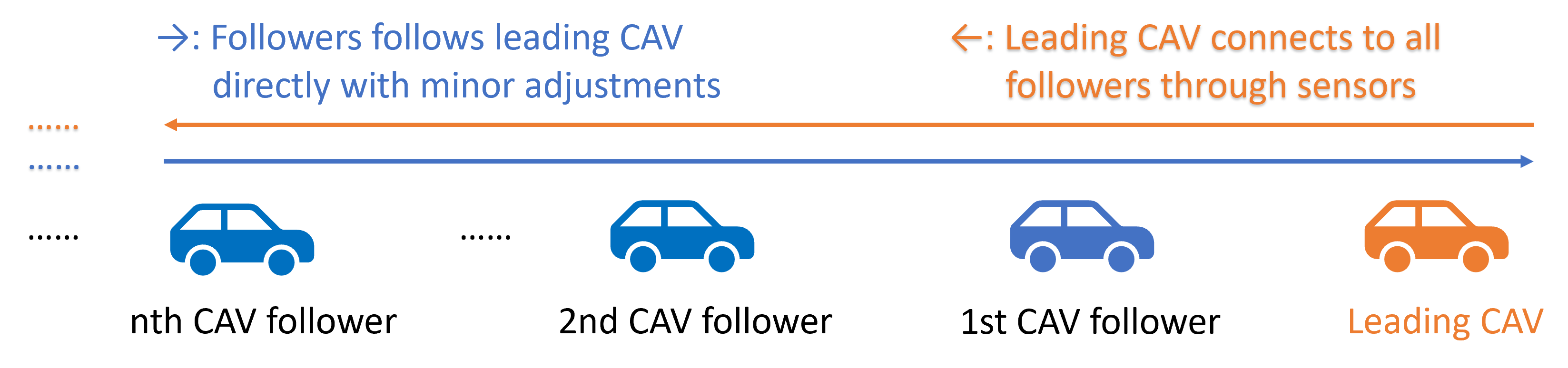}
    \caption{CF pattern of a single platoon}
    \label{fig:onepla}
\end{figure}

However, the reliability of \eqref{ch3_povm} is based on the assumption that platoon followers can precisely acquire information from the leading vehicle without delay, which is unrealistic for excessive amount of CAVs. Therefore, a multi-platoon system (where one platoon follows another) serves as a viable formation strategy for managing long strings of vehicles, helping to mitigate the effects of communication delays and instabilities in centralized controls. In the following subsections, we provide some examples of multi-platoon CF models where each platoon is internally robust and governed by \eqref{ch3_povm}. We will neglect the effects of communication delays and minor adjustments within each platoon.

\subsection{Multi-platoon: no inter-connection \label{subsec23}}

If there is no communication between platoons, we assume that the leading vehicle of each platoon only follows the last vehicle of the platoon ahead, as shown in Figure \ref{multiplafig1}. 
\begin{figure}[ht]
    \centering
    \includegraphics[width=0.9\linewidth]{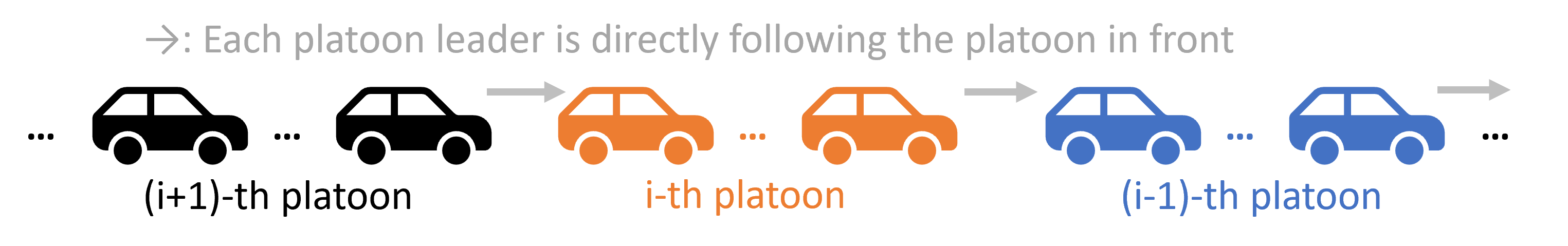}
    \caption{CF pattern of multiple platoons with no inter-platoon communication.}
    \label{multiplafig1}
\end{figure}
If no additional control is applied, the platoon leaders are directly following the vehicle ahead according to the OVM:
\begin{equation}
    \ddot{x}_{i,N_i}=a(V(x_{i+1,1}-x_{i,N_i})-\dot{x}_{i,N_i}),
    \label{leadernocon}
\end{equation}
where $1\leq i \leq m-1$. For the followers within each platoon, the centralized platoon controller \eqref{ch3_povm} is applied:
\begin{equation}
    \ddot{x}_{i,j}=a\left[V\left(\frac{x_{i,N_i}-x_{i,j}}{N_i-j}\right)-\dot{x}_{i,j}\right],
    \label{follower}
\end{equation}
where $1\leq i \leq m$ and $1\leq j \leq N_i-1$. In particular, if we have platoon size $N_m=1 \text{ or } 2$ for all $m$, the proposed model reduces to the OVM for HDVs. 

\subsection{Multi-platoon: two-way inter-connection \label{subsec24}}

In this subsection we suppose each platoon can communicate with platoons ahead and behind, with some delays (i.e. the platoons are two-way connected),  as shown in Figure \ref{multiplafig2}.
\begin{figure}[ht]
    \centering
    \includegraphics[width=0.9\linewidth]{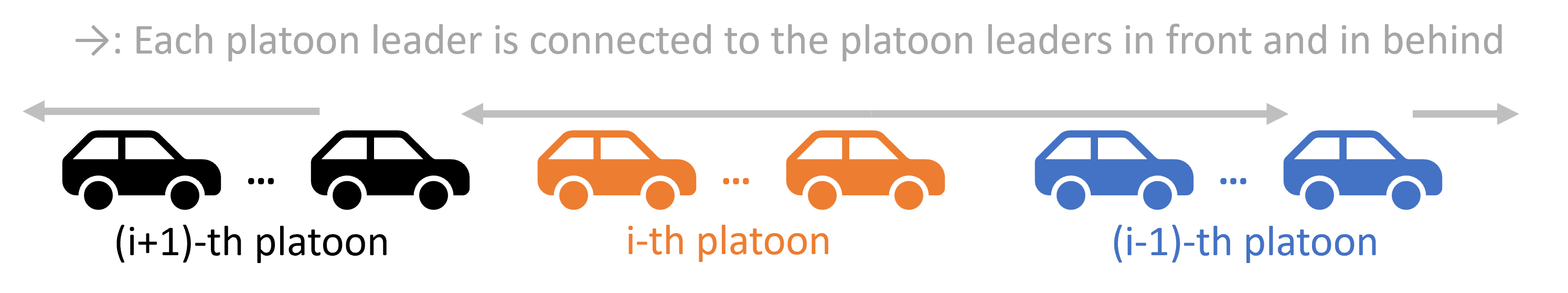}
    \caption{CF pattern of multiple platoons with forward and backward inter-platoon communication.}
    \label{multiplafig2}
\end{figure}
In this case, we can model the platoon leaders to follow both the platoon leader in front and the one behind, similar to the model for autonomous vehicles proposed in \cite{zhu2018analysis}:
\begin{equation}
\begin{aligned}
    \ddot{x}_{i,1}=a&\left[ (1+p)V\left(\frac{\Delta_{i-1}(t-t_{df})}{N_{i-1}}\right)-pV\left(\frac{\Delta_{i}(t-t_{db})}{N_{i+1}}\right)-\dot{x}_{i,1}\right],
\end{aligned}
\label{mp-ovm}
\end{equation}
where $\Delta_i(t)=x_{i+1,1}(t)-x_{i,1}(t)$ is the headway between the $i$-th and $i+1$-th platoon leader, $p$ is the smoothing factor for the platoon in the back, and $t_{df},\;t_{db}$ are the forward and backward communication delays (which can be non-constant) between platoons. For the followers we apply the same equation \eqref{follower} as in the previous model. In particular, if the platoon size $N_m=1$ and $t_d=0$ for all $m$, \eqref{mp-ovm} becomes equivalent to the modified OVM with autonomous vehicles in \cite{zhu2018analysis}.

\begin{remark}
The proposed frame work can be applied to any general second order CF models and combined with control strategies such as delayed feedback control, CACC, MPC, etc. However, the main focus of this paper is to present a basic framework for multi-platoon CAVs, so we have kept the models as simple as possible with minimal parameters.    
\end{remark}

\section{Stability analysis \label{sec23}}
In this section, we analyse the stability of the models in section \ref{sec22} through linear stability analysis. The steady-state (equilibrium) solution of all the aforementioned models on a ring road of length $L$ with $N_{tot}=\sum_{i=1}^{m}N_i$ vehicles is
\begin{equation}
    e_{i,j}(t)=h(N_1+\dots+N_{i-1}+j)+V(h)t,
    \label{ch3_equi}
\end{equation}
where $h=L/N_{tot}$ is the equilibrium headway. To analyse the effect of platoon sizes, for the stability analysis we assume that for the multi-platoon models all the platoons are of a uniform size, denoted as $N$. Then for the no-connection model proposed in Subsection \ref{subsec23}, the following stability criterion holds:
\begin{theorem}
    The no-connection multi-platoon model (\ref{leadernocon}, \ref{follower}) with identical platoon size $N$ is stable if
    \begin{equation}
        a>\frac{2NV'(h)}{(N-1)^2+1}.
        \label{crit1}
    \end{equation}
    \label{thm1}
\end{theorem}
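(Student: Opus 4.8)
The plan is a linear stability analysis tailored to the model's internal structure. First I would set $x_{i,j}(t)=e_{i,j}(t)+y_{i,j}(t)$ with small perturbations $y_{i,j}$ and linearize the leader law \eqref{leadernocon} and the intra-platoon controller \eqref{follower}/\eqref{ch3_povm} about the equilibrium \eqref{ch3_equi}. Every equilibrium headway equals $h$, so this yields the constant-coefficient system $\ddot y_{i,N}=aV'(h)(y_{i+1,1}-y_{i,N})-a\dot y_{i,N}$ for the leaders and $\ddot y_{i,j}=\frac{aV'(h)}{N-j}(y_{i,N}-y_{i,j})-a\dot y_{i,j}$ for the followers ($1\le j\le N-1$), with the platoon index $i$ read modulo $m$ on the ring. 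The structural fact I would exploit is that each in-platoon follower is coupled only to its own platoon leader and is otherwise uncoupled, so the sole inter-platoon link is that the leader of platoon $i$ tracks the rear car (follower $1$) of platoon $i+1$, which in turn tracks the leader of platoon $i+1$.

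Next I would insert the plane-wave ansatz $y_{i,j}(t)=\hat y_j\,e^{\mathbf{i}\theta i+zt}$, $\mathbf{i}^{2}=-1$, $\theta=2\pi k/m$ with $k=0,\dots,m-1$; since the theorem is claimed for every $m$, $\theta$ must ultimately be treated as dense in $(0,2\pi)$. The follower rows give $\hat y_j=\frac{aV'(h)}{(N-j)(z^{2}+az)+aV'(h)}\,\hat y_N$, and inserting the $j=1$ relation into the leader row collapses each mode block to the scalar characteristic equation
\begin{equation}
(w+b)\big((N-1)w+b\big)=b^{2}e^{\mathbf{i}\theta},\qquad w:=z^{2}+az,\quad b:=aV'(h),
\end{equation}
together with factors $z^{2}+az+\tfrac{b}{N-j}=0$ ($2\le j\le N-1$), whose roots lie in the open left half-plane automatically. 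For fixed $w$, both roots of $z^{2}+az-w=0$ lie in the open left half-plane exactly when $\mathrm{Re}(w)<-(\mathrm{Im}\,w)^{2}/a^{2}$ -- a one-line check with the principal square root, using that the two roots sum to $-a$. Hence mode $\theta\neq0$ is stable iff both solutions $w_{\pm}(\theta)=\frac{b}{2(N-1)}\big(-N\pm\sqrt{N^{2}-4(N-1)(1-e^{\mathbf{i}\theta})}\big)$ of the quadratic satisfy this inequality; at $\theta=0$ one gets $w_{+}=0$ (the simple translation root $z=0$) and $w_{-}=-\tfrac{Nb}{N-1}$, which is stable.

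The heart of the proof is the $w_{+}$ branch, which emanates from $w=0$ at $\theta=0$: a short Taylor expansion gives $w_{+}=\frac{b}{N}\mathbf{i}\theta-\frac{b((N-1)^{2}+1)}{2N^{3}}\theta^{2}+O(\theta^{3})$, so $\mathrm{Re}(w_{+})\approx-\frac{b((N-1)^{2}+1)}{2N^{3}}\theta^{2}$ and $(\mathrm{Im}\,w_{+})^{2}\approx\frac{b^{2}}{N^{2}}\theta^{2}$; substituting into $\mathrm{Re}(w_{+})<-(\mathrm{Im}\,w_{+})^{2}/a^{2}$ and cancelling $\theta^{2}$ yields $a^{2}((N-1)^{2}+1)>2Nb=2NaV'(h)$, i.e. precisely \eqref{crit1} (and $N=1,2$ recover the classical OVM threshold $a>2V'(h)$, as the model should). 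The main obstacle is to promote this long-wave computation to a genuine sufficiency statement over all $\theta\in(0,2\pi)$ and both branches: for $w_{-}$ one has $\mathrm{Re}(w_{-})\le-\tfrac{Nb}{2(N-1)}<0$ uniformly (the principal square root has nonnegative real part), but still needs a separate elementary estimate that it meets the inequality under \eqref{crit1}; for $w_{+}$ one writes $1-e^{\mathbf{i}\theta}$ through $s=\sin^{2}(\theta/2)\in[0,1]$ so that the inequality becomes a polynomial inequality in $s$, and must show \eqref{crit1} forces it on all of $[0,1]$ -- equivalently, that the characteristic equation has no purely imaginary root $z=\mathbf{i}\omega$ with $\omega\neq0$, the extremal mode being $\theta\to0$. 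That extremality/monotonicity argument is the step that needs real care; everything leading up to it is routine.
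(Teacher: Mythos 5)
Your proposal follows essentially the same route as the paper's proof: linearize about the ring equilibrium, observe that the followers are slaved to their own platoon leader so the dynamics collapse to the leader--tail chain, derive the characteristic relation $\bigl(r+1\bigr)\bigl((N-1)r+1\bigr)=e^{\mathbf{i}\theta}$ for $r=(z^2+az)/(aV'(h))$ (your $(w+b)((N-1)w+b)=b^2e^{\mathbf{i}\theta}$ is the same equation up to the normalization $r=w/b$), reduce stability to $\mathrm{Re}(w)<-(\mathrm{Im}\,w)^2/a^2$, and identify $\theta\to 0^+$ as the binding mode, whose expansion yields exactly \eqref{crit1}. The one step you flag as still needing care --- promoting the long-wave computation to all $\theta\in(0,2\pi)$ and both branches --- is precisely the step the paper handles by asserting, without detailed justification, that the bound \eqref{stabconst} is decreasing in $\theta$, so your plan matches the published argument in both structure and level of completeness.
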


\begin{proof}
    Assume that for each vehicle there is a small deviation from the equilibrium solution:
    \begin{equation}
        x_{i,j}(t)=e_{i,j}(t)+y_{i,j}(t),\;|y_{i,j}|\ll 1.
    \end{equation}
    Since the platoon leader is just following the last vehicle of the platoon in front, for the first and last vehicle of each platoon they formulate an sub-system of ODEs of $2m$ equations. And we can linearize the sub-system by doing Taylor expansion of $y_{i,j}$ and neglect higher order terms to get
    \begin{equation}
        \ddot{y}_{i,N}(t)=a\left[V'(h)(y_{i+1,1}(t)-y_{i,N}(t))-\dot{y}_{i, N}(t)\right]
        \label{lineart}
    \end{equation}
    for platoon leaders, and
    \begin{equation}
        \ddot{y}_{i,1}(t)=a\left[V'(h)\frac{y_{i,N}(t)-y_{i,1}(t)}{N-1}-\dot{y}_{i,1}(t)\right]
        \label{linearh}
    \end{equation}
    for the platoon tails, where $i$ is referring to all the integers that satisfies $1\leq i\leq m$ throughout the proof. The $(m+1)$-th platoon is the same as the $1$st platoon. Then if $\lambda$ is an eigenvalue of the linear ODE system, and $\xi_{i,j}$ are the corresponding coefficients of $y_{i,j}$, simplified from (\ref{linearh}, \ref{lineart}) we have
    \begin{equation}
        \lambda^2+a\lambda-aV'(h)\left(\frac{\xi_{i+1,1}}{\xi_{i,N}}-1\right)=0,
        \label{lambda1}
    \end{equation}    
    and
    \begin{equation}
        \lambda^2+a\lambda-aV'(h)\left(\frac{\xi_{i,N}}{(N-1)\xi_{i,1}}-\frac{1}{N-1}\right)=0.
        \label{lambda2}
    \end{equation}
        Then with the same $\lambda$, the constant parts of (\ref{lambda1}) and (\ref{lambda2}) are identical, and we can denote it by $r$. Then the real parts of $\lambda$ can be rewritten as 
    \begin{equation}
        \text{Re}(\lambda)=\frac{1}{2\left(-a+\sqrt{(\frac{d+\sqrt{d^2+e^2}}{2})}\right)},
    \end{equation}
    where $d=a^2+4\text{Re}(r)$ and $e=4\text{Im}(r)$. The sub-system (\ref{linearh}, \ref{lineart}) is stable if $\text{Re}(\lambda)<0$, which can be simplified to 
    \begin{equation}
        a>\left|\frac{\text{Im}^2(r)}{\text{Re}(r)}V'(h)\right|.
        \label{critim}
    \end{equation}

    Now it remains to show \eqref{crit1} implies \eqref{critim}. Note that 
    \begin{equation}
        \prod_{i=1}^m\frac{\xi_{i+1,1}}{\xi_{i,N}}\frac{\xi_{i,N}}{\xi_{i,1}}=1
    \end{equation}
    holds since $m+1=1$ on the ring road, combining with (\ref{lambda1}, \ref{lambda2}) we have $r$ satisfies
    \begin{equation}
        ((N-1)r+1)^m(r+1)^m=1.
        \label{eqnforr}
    \end{equation}
    Then we can solve for $r$ to get
    \begin{equation}
    r=\frac{-N\pm\sqrt{N^2-4(N-1)(1-\exp(\frac{2\pi ki}{m}))}}{2(N-1)},
    \end{equation}
    where $k=1,2,\dots,m$. Let $\theta=\frac{2\pi k}{m}$ and $l=1/(N-1)$, then
    \begin{equation}
    -\frac{\text{Im}^2(r)}{\text{Re}(r)}=\frac{\sqrt{d_2^2+e_2^2}-d_2}{l+1\pm\sqrt{\frac{\sqrt{d_2^2+e_2^2}+d_2}{2}}},
    \label{stabconst}
    \end{equation}
where $d_2=(l+1)^2-4l(1-\cos\theta)$ and $e_2=4l\sin\theta$. Therefore \eqref{stabconst} can be considered as a function of $\theta$, and for $l<1$ this is a decreasing function for $\theta>0$. Then we can obtain
\begin{equation}
    \left|\frac{\text{Im}^2(r)}{\text{Re}(r)}\right|<\lim_{\theta\to0^+}\left(-\frac{\text{Im}^2(r)}{\text{Re}(r)}\right)=\frac{2N}{(N-1)^2+1}.
    \label{upper}
\end{equation}
    Combined \eqref{upper} with \eqref{critim}, the sub-system (\ref{linearh}, \ref{lineart}) is stable if the stability criterion \eqref{crit1} holds. For the remaining vehicles inside each platoon, the solution is only determined by the leading vehicle of the platoon. And by linearization of \eqref{follower} we can rewrite $y_{i,j}$ as
    \begin{equation}
        y_{i,j}=\left(N-j+\frac{j-1}{N-1}\right)y_{i,N},
    \end{equation}
    which is a linear function of $y_{i,N}$. This means the stability of the multi-platoon system is the same as the sub-system (\ref{linearh}, \ref{lineart}). Therefore stability criterion \eqref{crit1} holds.
\end{proof}

Figure \ref{stabplot1} is the plot of stability regions with different platoon size of the no-connection model. 
\begin{remark}
    For all the stability plots, each neutral stability line separates the graph into two regions: the region above the line is stable and the region below the line is unstable.
\end{remark}
\begin{figure}[ht]
    \centering
    \includegraphics[width=0.7\linewidth]{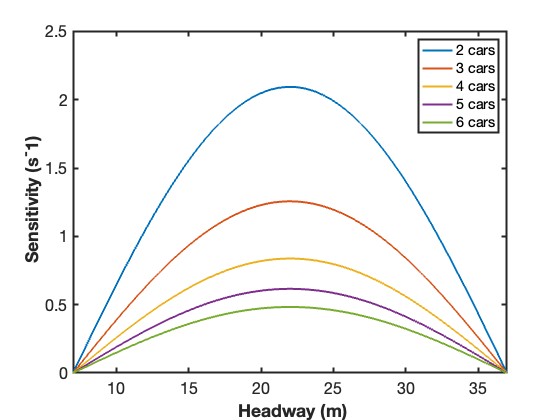}
    \caption{Neutral stability lines of the multi-platoon model with no connection of platoon size $N=2,3,4,5,6$.}
    \label{stabplot1}
\end{figure}

Using similar approaches, for the two-way connected model proposed in Subsection \ref{subsec24}, the following stability criterion applies:
\begin{theorem}
    The two-way connected multi-platoon model (\ref{mp-ovm}, \ref{follower}) with identical platoon size $N$ and constant communication delay $t_d=t_{df}=t_{db}$ is stable if
    \begin{equation}
        a>\frac{2V'(h)}{(1+2p)(N-2t_d V'(h))}.
    \end{equation}
    \label{thm2}
\end{theorem}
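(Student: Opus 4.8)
The plan is to follow the proof of Theorem~\ref{thm1}, adapting it to the two new ingredients of model~\eqref{mp-ovm}: the constant inter-platoon delay $t_d$ and the backward-smoothing term weighted by $p$. As there, I would write $x_{i,j}(t)=e_{i,j}(t)+y_{i,j}(t)$ with $|y_{i,j}|\ll1$ and linearize. Since the followers inside each platoon still obey \eqref{follower}, the reduction performed at the end of the proof of Theorem~\ref{thm1} applies verbatim: each $y_{i,j}$ is a fixed linear function of the perturbation of its own platoon leader, so the stability of the whole system coincides with that of the $m$-dimensional sub-system of the platoon leaders. This sub-system is in fact cleaner than the leader--tail coupling of Theorem~\ref{thm1}, because in \eqref{mp-ovm} each leader is coupled only to the leaders of the two adjacent platoons, through $\Delta_i$ and $\Delta_{i-1}$. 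Linearizing \eqref{mp-ovm} about the equilibrium \eqref{ch3_equi} (where $\Delta_i/N=h$) gives a constant-coefficient linear delay-differential system that is circulant in the platoon index; looking for eigensolutions $y_{i,1}(t)=\xi_i e^{\lambda t}$ with $\xi_{i+1}/\xi_i=\xi_i/\xi_{i-1}=z$ for an $m$-th root of unity $z=e^{\mathrm{i}\theta}$, $\theta=2\pi k/m$, $k=1,\dots,m-1$, reduces it to the characteristic equation
\begin{equation}
  \lambda^{2}+a\lambda-\frac{aV'(h)}{N}\,e^{-\lambda t_d}\bigl[(1+2p)(\cos\theta-1)+\mathrm{i}\sin\theta\bigr]=0 .
  \label{chareq2}
\end{equation}

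Next I would deal with the transcendental factor $e^{-\lambda t_d}$. Since the theorem only claims that a small delay mildly perturbs the stability boundary, I would use the expansion $e^{-\lambda t_d}=1-\lambda t_d+O(t_d^{2})$ and keep the linear term, turning \eqref{chareq2} into the ordinary quadratic
\begin{equation}
  \lambda^{2}+\Bigl(a+\frac{aV'(h)t_d}{N}\,c(\theta)\Bigr)\lambda-\frac{aV'(h)}{N}\,c(\theta)=0 ,
  \label{quad2}
\end{equation}
with $c(\theta)=(1+2p)(\cos\theta-1)+\mathrm{i}\sin\theta$; equivalently one may keep the exact factor, set $\lambda=\mathrm{i}\nu$ on the neutral-stability boundary, and split \eqref{chareq2} into its real and imaginary parts, which yields the same leading-order threshold. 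From this point the algebra parallels the passage following \eqref{lambda1}--\eqref{lambda2}: writing $\text{Re}(\lambda)$ in terms of the real and imaginary parts of the (complex) constant and damping coefficients of \eqref{quad2}, the requirement $\text{Re}(\lambda)<0$ for all $k$ collapses to a single inequality of the form $a>\bigl|\,g(\theta)\,V'(h)\,\bigr|$ with an explicit function $g$.

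Then, as in Theorem~\ref{thm1}, I would show that $g$ is monotone on $(0,\pi]$, so its supremum is the long-wavelength limit $\theta\to0^{+}$. Using $\cos\theta-1\sim-\theta^{2}/2$, $\sin\theta\sim\theta$, and the leading-order balance $a\nu\sim\frac{aV'(h)}{N}\theta$ coming from the imaginary part (so $\nu\sim V'(h)\theta/N$), the threshold reduces to the stated criterion $a=\frac{2V'(h)}{(1+2p)(N-2t_dV'(h))}$; the denominator can be read as $N$ replaced by an ``effective platoon size'' $N-2t_dV'(h)$. Combining this with the follower reduction of the first paragraph gives stability of the full model whenever the inequality holds. (Implicit in this is that $t_d$ is small enough for the denominator to remain positive and for the $O(t_d^{2})$ terms to be negligible, consistent with the claim that minor inter-platoon delays have a negligible effect on stability.)

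I expect the \emph{main obstacle} to be this delay-and-worst-case step: justifying the truncation $e^{-\lambda t_d}\approx 1-\lambda t_d$ (or, equivalently, carrying out the exact $\lambda=\mathrm{i}\nu$ computation without the intermediate expressions becoming unwieldy), and proving the monotonicity of $g(\theta)$ that pins the worst case to $\theta\to0^{+}$ --- with both $p$ and $t_d$ present, these expressions are substantially heavier than in Theorem~\ref{thm1}, and a little care is needed to confirm that the higher-order delay terms genuinely do not shift the stability boundary at leading order.
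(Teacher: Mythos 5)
Your overall strategy is the same as the paper's: reduce to the $m$ platoon leaders (the follower reduction from Theorem~\ref{thm1} carries over), impose the circulant ansatz $\xi_{i+1}/\xi_i=e^{i\theta}$ with $\theta=2\pi k/m$, truncate $e^{-\lambda t_d}$ at first order in $t_d$, and take the long-wave limit $\theta\to0^{+}$ as the worst case. Moreover, the characteristic equation you write down,
\begin{equation*}
\lambda^{2}+a\lambda-\frac{aV'(h)}{N}\,e^{-\lambda t_d}\bigl[(1+2p)(\cos\theta-1)+i\sin\theta\bigr]=0,
\end{equation*}
is the correct linearization of \eqref{mp-ovm}; note it is \emph{not} the equation printed in the paper, which instead attaches the factor $(1+2p)$ to the damping term (writing $a(1+2p)\lambda$) and leaves the coupling as $(1-e^{i\theta})$ with no $p$-dependence.

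The gap is in the final step, which you leave as ``the algebra parallels Theorem~\ref{thm1} \dots\ the threshold reduces to the stated criterion.'' It does not. Carrying out the $\theta\to0^{+}$ expansion of \emph{your} equation (via $\lambda=\lambda_1\theta+\lambda_2\theta^2+\cdots$, or via the complex Routh--Hurwitz condition) gives $\lambda_1=iV'(h)/N$ and, with $\beta=aV'(h)/N$,
\begin{equation*}
a\lambda_2=\frac{\beta^{2}}{a^{2}}+\frac{\beta^{2}t_d}{a}-\frac{\beta(1+2p)}{2},
\end{equation*}
so $\mathrm{Re}(\lambda_2)<0$ is equivalent to
\begin{equation*}
a>\frac{2V'(h)}{(1+2p)N-2t_dV'(h)},
\end{equation*}
in which $(1+2p)$ does \emph{not} multiply the delay correction, whereas Theorem~\ref{thm2} asserts $a>\frac{2V'(h)}{(1+2p)(N-2t_dV'(h))}$; the two agree only when $p=0$ or $t_d=0$. (The theorem's bound is the larger of the two, so it would still be a sufficient condition, but it is not what your setup produces, and you cannot claim the stated formula without locating the extra factor of $(1+2p)$ on the $t_d$ term --- which in the paper originates from its differently placed $(1+2p)$ in the characteristic equation.) You must either push the algebra through and prove the inequality your equation actually implies, or reconcile your linearization with the paper's. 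The monotonicity of the threshold in $\theta$, which you rightly flag as unproven (the paper simply ``lets $\theta\to0$''), is a secondary issue by comparison.
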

\begin{proof}
    For the connected model (\ref{mp-ovm}, \ref{follower}) we follow the assumptions of previous works, e.g. \cite{zhou2014nonlinear, ngoduy2015linear} such that higher orders of the constant delay $t_d$ are neglected. And after linearization, for the connected multi-platoon system in Subsection \ref{subsec24}, The $m$ platoon leaders form a system of $m$ linear ODEs:
\begin{equation}
\begin{aligned}
    \ddot{y}_{i,N}(t)=a&\left[V'(h)(1+p)\frac{\Delta_{i,N}(t-t_d)}{N}-V'(h)p\frac{\Delta_{i-1,N}(t-t_d)}{N}-\dot{y}_{i,N}(t)\right].
    \end{aligned}
\end{equation}
Then if $\lambda$ is an eigenvalue of the system, by reserving first order of $t_d$ via Taylor expansion, we have
\begin{equation}
    \lambda^2+a(1+2p)\lambda-(1-\lambda t_d)\frac{aV'(h)}{N}(1-e^{i\theta})=0,
\end{equation}
where $\theta$ is the same as in the proof of Theorem \ref{thm1}. Then by simplifying the condition $\text{Re}(\lambda)<0$, the stability criterion is equivalent to
\begin{equation}
    4k(1-\cos\theta)+8at_d k^2(1-\cos\theta)^2>k^2\sin^2\theta,
\end{equation}
where $k=a/(N\cdot V'(h))$. Then let $\theta\to 0$ we can get the stability criterion given in Theorem \ref{thm2}.
\end{proof}
Figure \ref{stabplot2} is the plot of the front connected model's stability regions ($p=0,t_d=0$).
\begin{figure}[ht]
    \centering
    \includegraphics[width=0.7\linewidth]{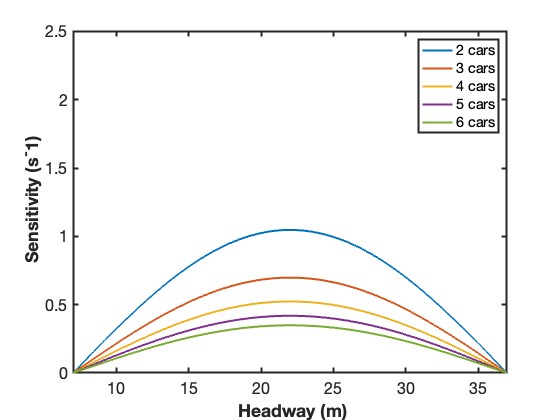}
    \caption{Neutral stability lines of the multi-platoon model with front connection of platoon size $N=2,3,4,5,6$ and zero-delay.}
    \label{stabplot2}
\end{figure}
We observe that the connected model exhibits larger stability regions than the non-connected model. However, the difference diminishes as platoon size increases. Figure \ref{stabplot3} is the plot of stability regions of the two-way connected model with different delays of backward sensitivity $p=0.3$ and platoon size $N=4$. 
\begin{figure}[ht]
    \centering
    \includegraphics[width=0.7\linewidth]{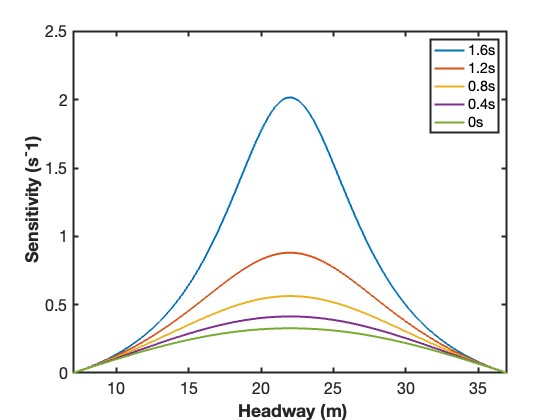}
    \caption{Neutral stability lines of the multi-platoon model with two-way connection of platoon size $N=4$ and delay $t_d=0,0.35,0.8,1.2,1.6$s.}
    \label{stabplot3}
\end{figure}
From this figure, we can observe that the effect of delay become larger as it get close to $1.6$s. Additionally, if the delay reach $2$s then the model becomes consistently unstable for headways between $20$ and $25$ meters. 
\begin{remark}
    For the effect of backward sensitivity $p$ we refer to \cite{nakayama2001effect, zhu2018analysis}.
\end{remark}

\section{Numerical simulations \label{sec24}}

\subsection{General information} \label{subsec41}
We use MATLAB for both simulation and plots throughout this section. The simulations are performed on a single-lane roads. To acquire more realistic results, we modified the OVM by adding a maximum acceleration constraint and an emergency braking system as follows:
\begin{itemize}
    \item Maximum acceleration constraint: Due to mechanical limits, the maximum allowable acceleration can be less than the theoretical value predicted by the optimal velocity model. For the simulations in this section, we add a constant maximum acceleration constraint, denoted as $a_m$.
    \item Emergency braking system: To avoid collisions, we implement an emergency braking system. If the headway of two adjacent vehicles get smaller than the safety headway $h_m$ (which is a function of the current speed and relative speed between the two vehicles), then the vehicle behind brakes with emergency braking deceleration $a_b$.
\end{itemize}
The modified OVM for HDVs is then given by
\begin{equation}
    \ddot{\tilde{x}}_i=\begin{cases}
        \min\left(\ddot{x}_i,a_m\right),\text{ if }x_{i}(t)-x_{i+1}(t)\geq h_m;\\
        a_b,\text{ if }x_{i}(t)-x_{i+1}(t)< h_m,
    \end{cases}
    \label{movm}
\end{equation}
where the acceleration term $\ddot{x}_i$ is given by \eqref{ch3_OVM1}.We modify the multi-platoon models accordingly by substituting $\ddot{x}_i$ with other acceleration functions. The solutions of the models are obtained in discrete forms using a modified Euler scheme: 
\begin{equation}
    \begin{cases}
    \dot{x}_{i,j+1}=\dot{x}_{i,j}+\ddot{\tilde{x}}_{i,j}\Delta t;\;\\
    x_{i,j+1}=x_{i,j}+\frac{\dot{x}_{i,j}+\dot{x}_{i,j+1}}{2}\Delta t,
    \end{cases}
\end{equation}
where $\Delta t$ is the uniform time step size, $x_{i, j}$, $\dot{x}_{i,j}$, $\ddot{\tilde{x}}_{i,j}$ are the position, velocity, acceleration of the $i$-th car at the $j$-th time step of simulation, respectively. This scheme is equivalent to the ones in \cite{zhu2018analysis} and \cite{hui2024new}. We also use a consistent time step size of $\Delta t=0.1$ seconds. 

The model parameters are set as follows: The total length of the ring road is $L=2640$m, with a total of $N_{tot}=120$ vehicles. All simulations run for the same duration $T=4000$ seconds. We set the maximum acceleration constraint to $a_m=3$m/s$^2$, the emergency braking deceleration to $a_b=-8$m/s$^2$ and define the safety headway as
\begin{equation}
    h_m(v_i,v_{i+1})=\frac{(v_{i}-v_{i+1})^2}{|2a_b|}+\tau(v_{i}-v_{i+1})+l,
\end{equation}
where $v_i$ is the speed of the $i$-th vehicle, $\tau=4$ is the constant time headway for safety, and $l=5$m is the length of each vehicle. We fix the forward sensitivity as $a=0.6$ and backward sensitivity as $p=0.3$ if available. The optimal velocity function is given by equation \eqref{ch3_ovf1}, with parameters $h_{\min}=7$m, $h_{\max}=37$m, $v_{\max}=20$m/s. The equilibrium headway and velocity are calculated as $h=L/N=22$m and $V(h)=10$ m/s, respectively. The initial position and velocity of the $i$-th vehicle are deviated from the equilibrium states $(e_i,V(h))$ with random perturbations uniformly distributed on the interval $[-5/2,5/2]$. The initial condition of the model is given by
\begin{equation}
    \begin{cases}
        x_i(0) =& e_i(0)+r_i, \\
        \dot{x}_i(0) = & V(h)+\bar{r}_i
    \end{cases}
\end{equation}
where $r_i,\bar{r}_i$ are random values generated from a uniform distribution over $[-5/2,5/2]$, and $e_i(0)=hi$ can be calculated from equation \eqref{ch3_equi}. In the following subsections, we introduce three sets of simulations involving CAV platoons of varying platoon sizes, communication levels, and distributions in mixed traffic.
\begin{remark}
    For readers interested in variations of sensitivity parameters, studies including \cite{bando1995dynamical}, \cite{lenz1999multi}, \cite{zhu2018analysis}, \cite{hui2024new} explore different sensitivity parameter settings in various simulations.
\end{remark}

\subsection{Experiments of platooned CAVs}
In this subsection, we conduct experiments on traffic flow consisting solely of identically-sized CAV platoons, aiming to investigate the effects of various factors such as platoon size, connectivity, and communication delay.

\subsubsection{Different platoon sizes}
In this simulation, we aim to show the effect of platoon size and connectivity without delay. The platoon sizes are selected as $N=2,3,4,5$, and the connectivity options between platoons include no-connection, front-connection, and two-way connection. Figure \ref{sim1-1-1h} is the headway plots for the no-connection model with platoon size $N=2,3,4,5$ (it is not necessary to explore $N>5$ since $N=5$ is already stabilized).
\begin{figure}[ht]
    \centering
    \subfloat[$N=2$]{\includegraphics[width=0.35\textwidth]{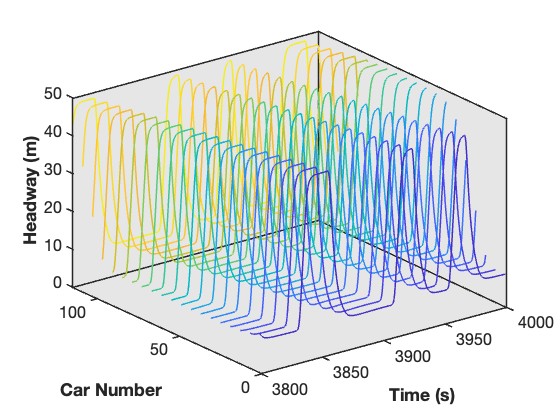}}
    \subfloat[$N=3$]{\includegraphics[width=0.35\textwidth]{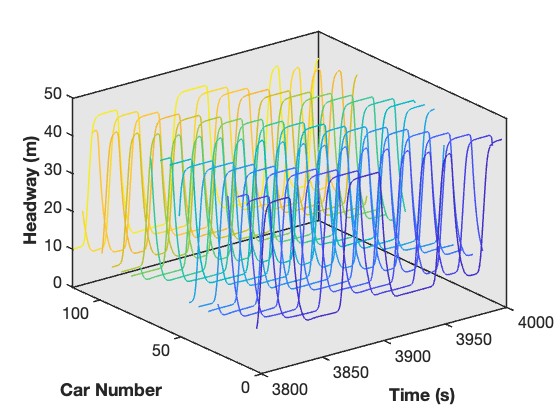}}
    \\
    \subfloat[$N=4$]{\includegraphics[width=0.35\textwidth]{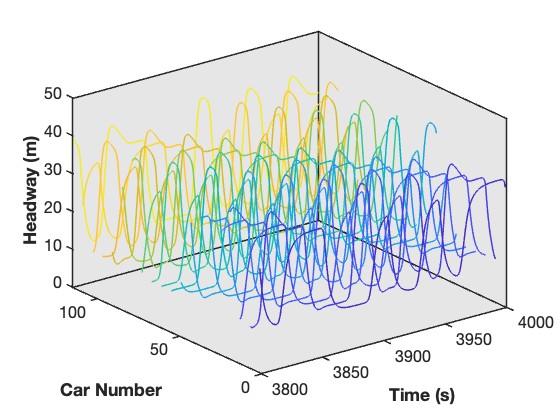}}
    \subfloat[$N=5$]{\includegraphics[width=0.35\textwidth]{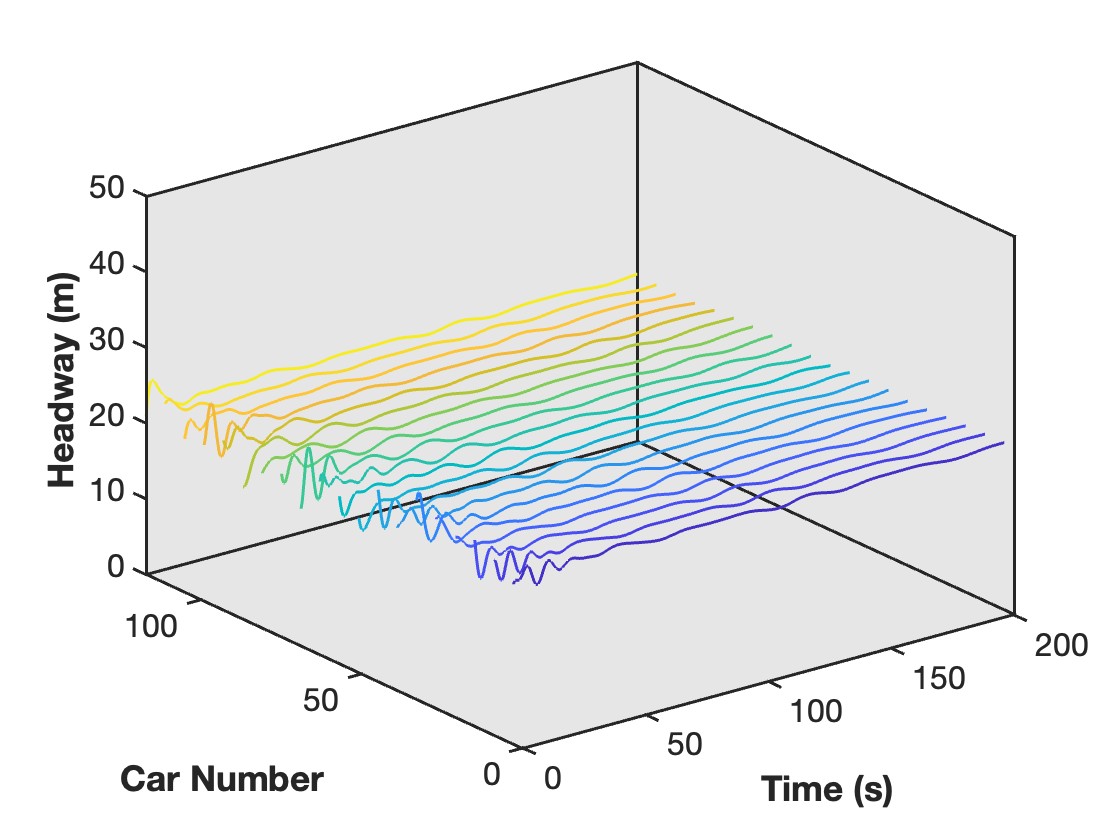}}
    \caption{Plots of headways for selected vehicles with no-connection for platoon sizes $N=2,3,4,5$.}
    \label{sim1-1-1h}
\end{figure}

\begin{remark}
     For the headway plots in this section, if the vehicles have not stabilized after $4000$s, we select every 6th vehicle from the 1st to the 120th for plotting (20 vehicles in total), with the time interval spanning from 3800 to 4000 seconds. If stabilized before $4000$s, the headway plots will instead cover the time period from $0$s to stabilization ($60$ to $300$s, depending on the scenarios). 
\end{remark}
Figure \ref{sim1-1-1v} is the plots of minimum and maximum speeds of all vehicles corresponding to Figure \ref{sim1-1-1h}.
\begin{figure}[ht]
    \centering
    \subfloat[$N=2$]{\includegraphics[width=0.35\textwidth]{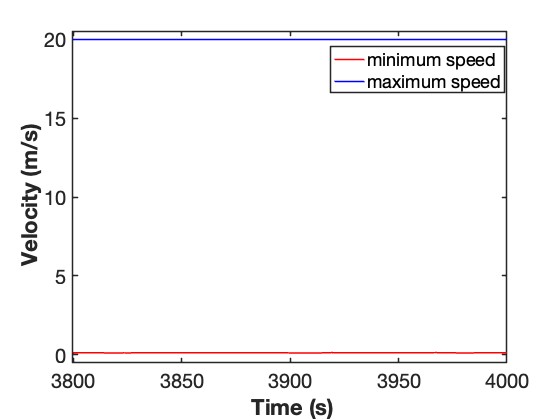}}
    \subfloat[$N=3$]{\includegraphics[width=0.35\textwidth]{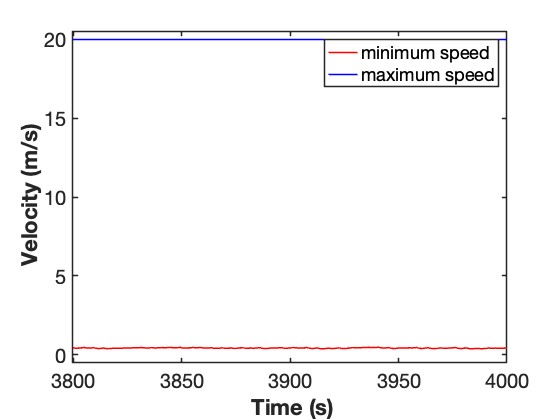}}
    \\
    \subfloat[$N=4$]{\includegraphics[width=0.35\textwidth]{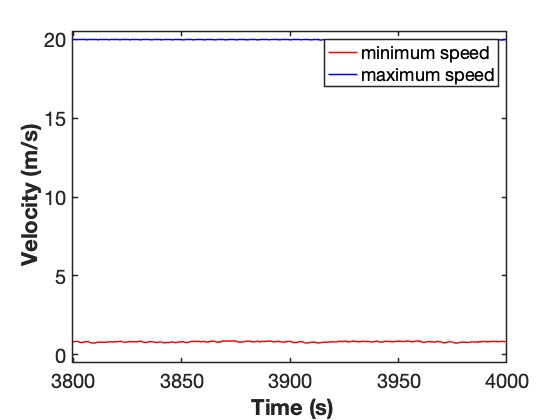}}
    \subfloat[$N=5$]{\includegraphics[width=0.35\textwidth]{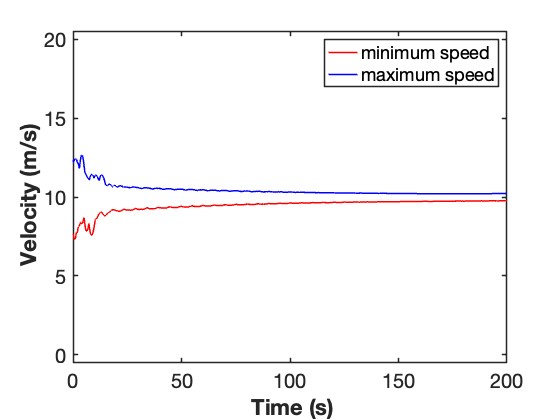}}
    \caption{Plots of minimum and maximum speeds of platoons with no-connection and size $N=2,3,4,5$.}
    \label{sim1-1-1v}
\end{figure}
Figure \ref{sim1-1-2h} is the headway plots for front-connected platoons of size $N=2,3,4$ and for two-way connected platoons of size $N=2$.
\begin{figure}[ht]
    \centering
    \subfloat[$N=2$ front]{\includegraphics[width=0.35\textwidth]{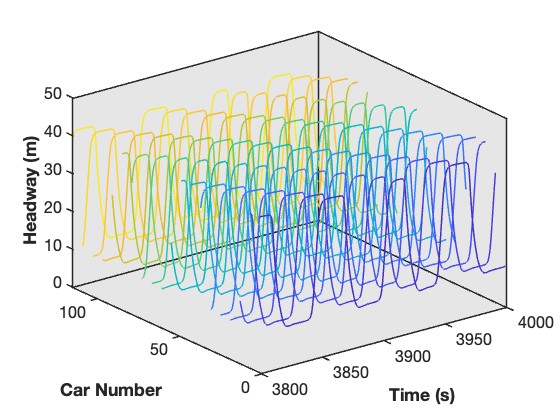}}   
    \subfloat[$N=3$ front]{\includegraphics[width=0.35\textwidth]{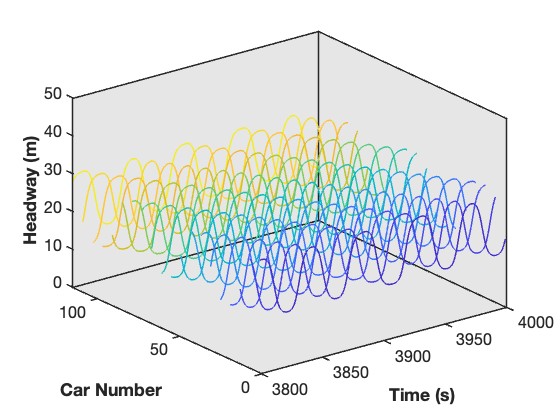}}
    \\
    \subfloat[$N=4$ front]{\includegraphics[width=0.35\textwidth]{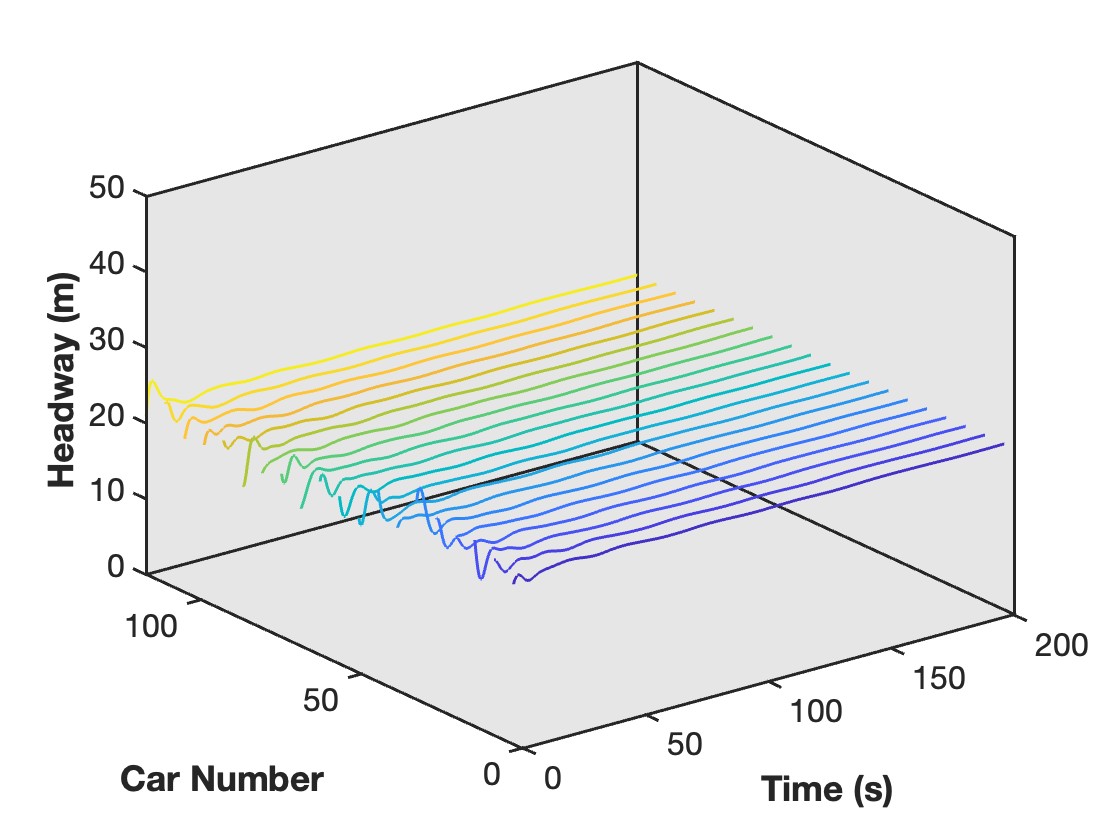}}
    \subfloat[$N=2$ two-way]{\includegraphics[width=0.35\textwidth]{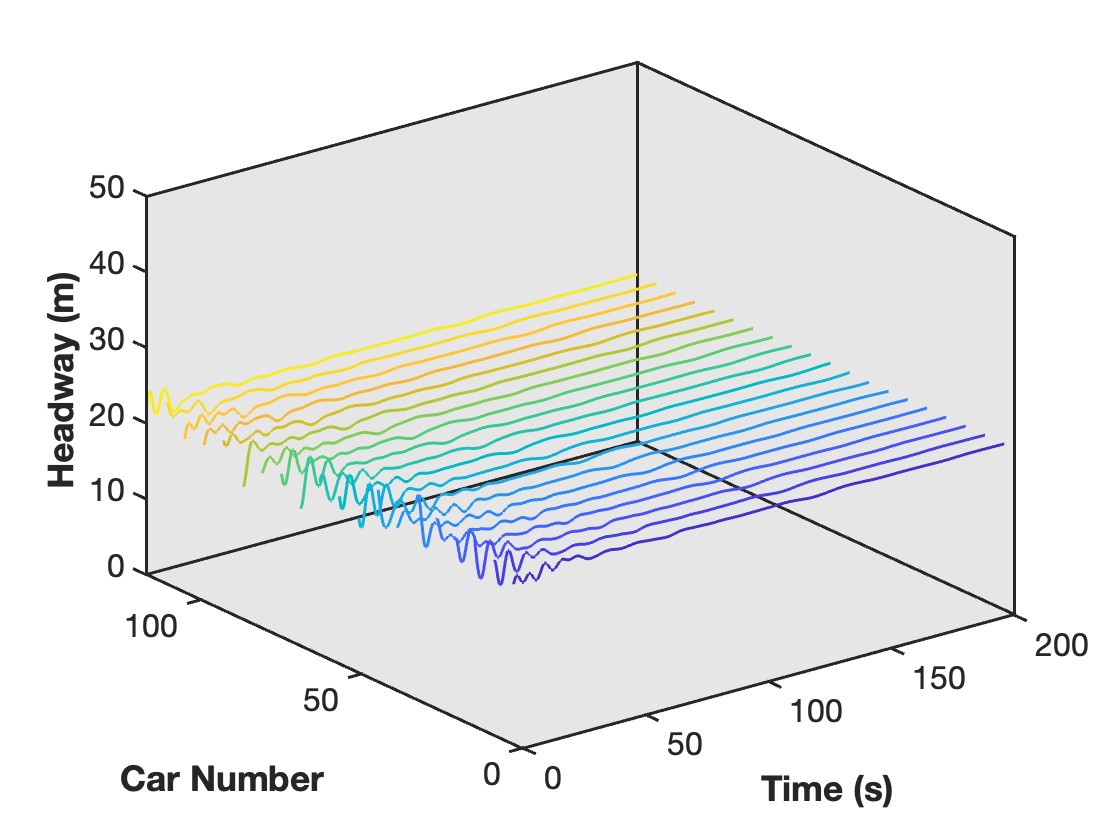}}
    \caption{Plots of headways for selected vehicles with front connection for platoon sizes $N=2,3,4$, two-way connection for platoon size $N=2$.}
    \label{sim1-1-2h}
\end{figure}

From the simulation results, we observe that the stability of CAV platoons improves by increasing platoon size when intra-platoon communication is robust. Moreover, with front and two-way connections, the equilibrium state can be achieved with platoons consisting of just four and two CAVs, respectively. However, this condition is satisfied only if the inter-platoon communication is delay-free. These findings highlight the critical role of intra-platoon V2V communication in maintaining stability, suggesting that prioritizing strong intra-platoon connectivity is more beneficial than relying on inter-platoon communication to achieve stability.

\subsubsection{Fixed platoon size with delays}
In this simulation, our objective is to find the effects of inter-platoon communication delay. We fix the platoon sizes at $N=4$ and apply constant and stochastic delays for the two-way connected model. Figure \ref{sim1-2-1h} is the headway plots for the two-way connected model with communication delays $t_d=0.35, 0.8, 1.2, 1.6$s. 
\begin{figure}[ht]
    \centering
    \subfloat[$t_d=0.35$s]{\includegraphics[width=0.35\textwidth]{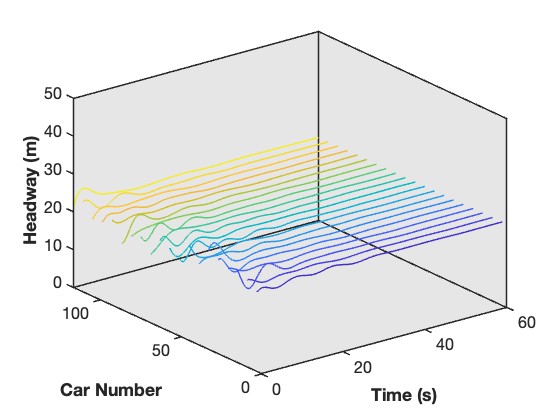}}    
    \subfloat[$t_d=0.8$s]{\includegraphics[width=0.35\textwidth]{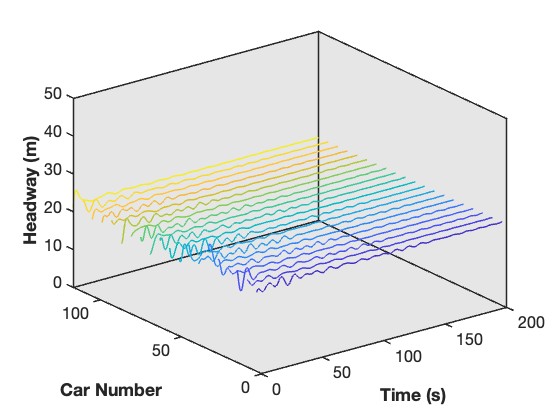}}
    \\
    \subfloat[$t_d=1.2$s]{\includegraphics[width=0.35\textwidth]{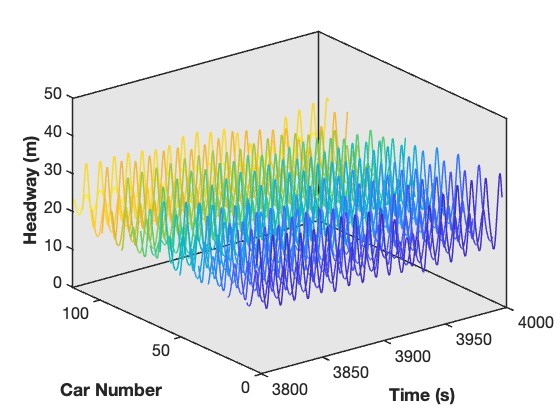}}    
    \subfloat[$t_d=1.6$s]{\includegraphics[width=0.35\textwidth]{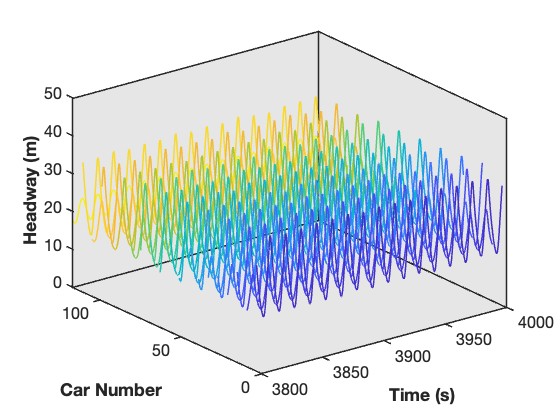}}
    \caption{Plots of headways for selected vehicles with two-way connection for platoon size $N=4$ and communication delays $t_d=0.35, 0.8, 1.2, 1.6$s.}
    \label{sim1-2-1h}
\end{figure}
Figure \ref{sim1-2-1v} is the plots of minimum and maximum speeds of all vehicles corresponding to Figure \ref{sim1-2-1h}.
\begin{figure}[ht]
    \centering
    \subfloat[$t_d=0.35$s]{\includegraphics[width=0.35\textwidth]{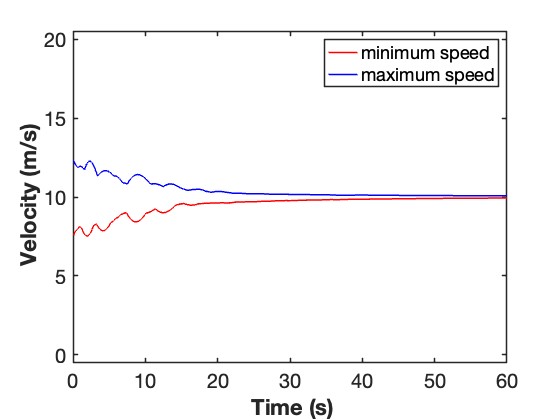}}   
    \subfloat[$t_d=0.8$s]{\includegraphics[width=0.35\textwidth]{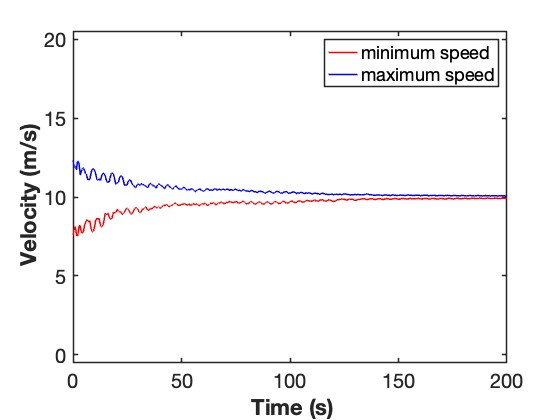}}
    \\
    \subfloat[$t_d=1.2$s]{\includegraphics[width=0.35\textwidth]{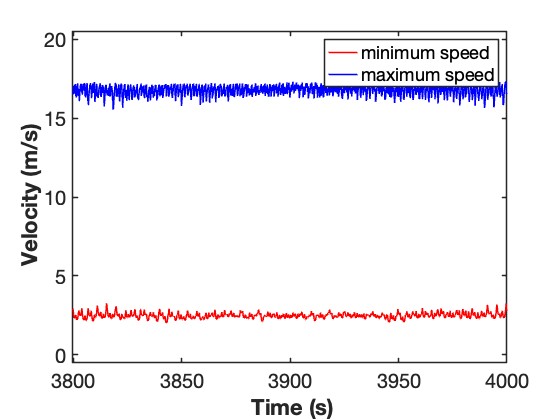}}    
    \subfloat[$t_d=1.6$s]{\includegraphics[width=0.35\textwidth]{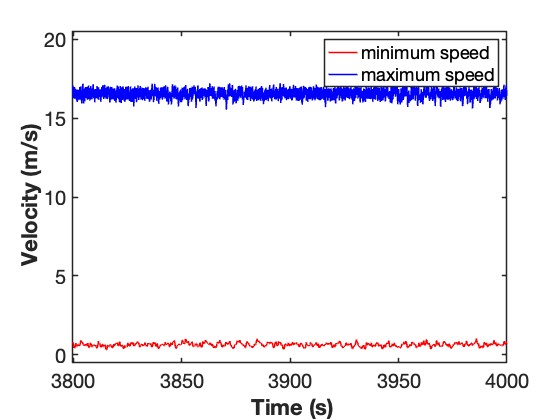}}
    \caption{Plots of minimum and maximum speeds of platoons with two-way connection of size $N=4$ and communication delays $t_d=0.35, 0.8, 1.2, 1.6$s.}
    \label{sim1-2-1v}
\end{figure}
Figure \ref{sim1-2-2} is the plots of headways and extreme speeds of all vehicles corresponding to Figure \ref{sim1-2-1h}.
\begin{figure}[ht]
    \centering
    \subfloat[Headway: $t_d=0.6 + r_d$s]{\includegraphics[width=0.35\textwidth]{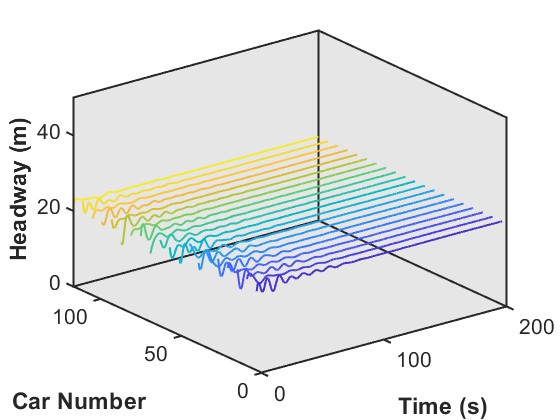}}
    \subfloat[Headway: $t_d=1 + r_d$s]{\includegraphics[width=0.35\textwidth]{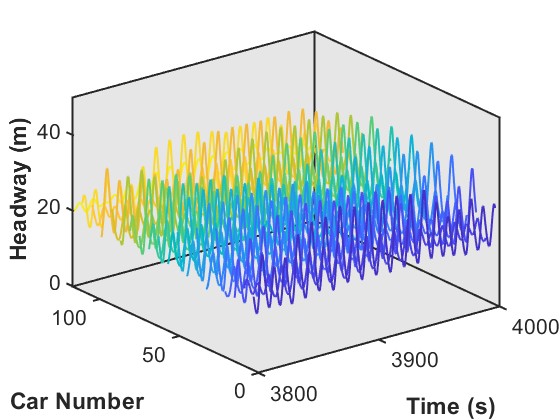}}
    \\
    \subfloat[Speed: $t_d=0.6 + r_d$s]{\includegraphics[width=0.35\textwidth]{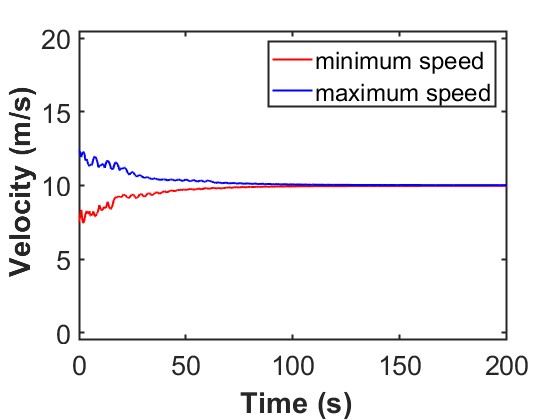}}
    \subfloat[Speed: $t_d=1 + r_d$s]{\includegraphics[width=0.35\textwidth]{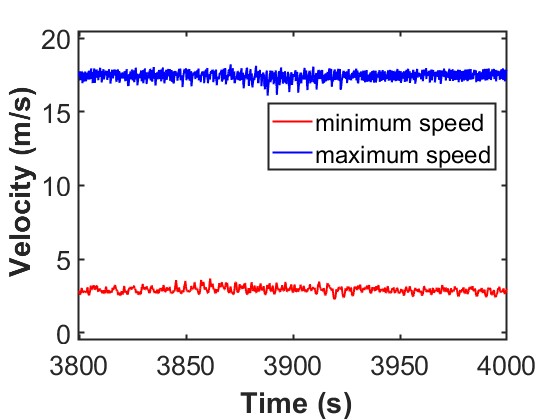}}
    \caption{Plots of headways and extreme speeds of platoons with two-way connection of size $N=4$ and communication delays $t_d = 0.6 + r_d, 1 + r_d$s.}
    \label{sim1-2-2}
\end{figure}

From the simulation results, we observe that increasing communication delays between platoons negatively impacts the stability of CAV platoons. Moreover, the variance in headways grows exponentially with increased delay, which aligns with theoretical analysis. The results from random delay shows that the noises from communication have negligible to marginal impacts under certain threshold. These findings can guide CAV manufacturers in setting standards for sensors and other hardware components that influence communication delays.

\subsection{Experiments of mixed autonomy}
One potential benefit of implementing CAV platoons in traffic flow is their stabilizing effect when mixed with HDVs (which can be treated as CAV platoons of size 1 with no communication). In this subsection, we test various distributions of CAV platoons and HDVs on the ring road described in Subsection \ref{subsec41} to evaluate their impact on traffic flow stability.

\subsubsection{Segregated CAV platoons and HDVs}
We first consider the scenario where CAV platoons and HDVs are segregated into two distinct groups, each forming its own string of vehicles. The platoon configurations are set with sizes of $N=6$ or $N=8$, with no inter-platoon connections. Figure \ref{sim2-1-1h} is the headway plots for segregated traffic with CAV platoons of size $N=6$ with either $24$ or $30$ HDVs, and CAV platoons of size $N=8$ with either $32$ or $40$ HDVs.
\begin{figure}[ht]
    \centering
    \subfloat[$N=6$; $30$ HDVs]{\includegraphics[width=0.35\textwidth]{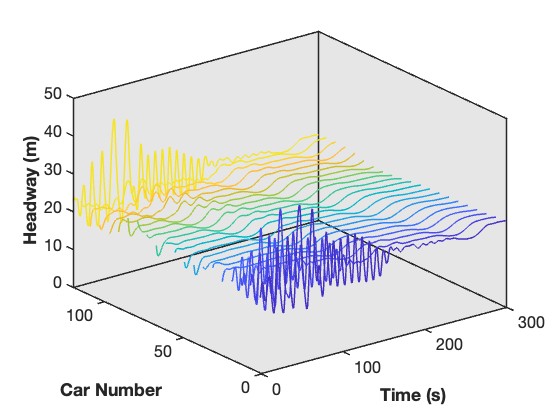}}
    \subfloat[$N=6$; $36$ HDVs]{\includegraphics[width=0.35\textwidth]{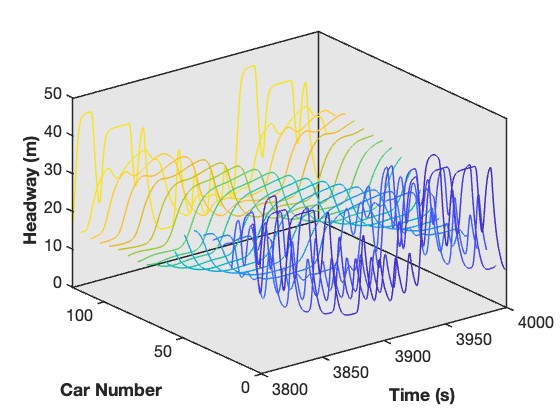}}
    \\
    \subfloat[$N=8$; $32$ HDVs]{\includegraphics[width=0.35\textwidth]{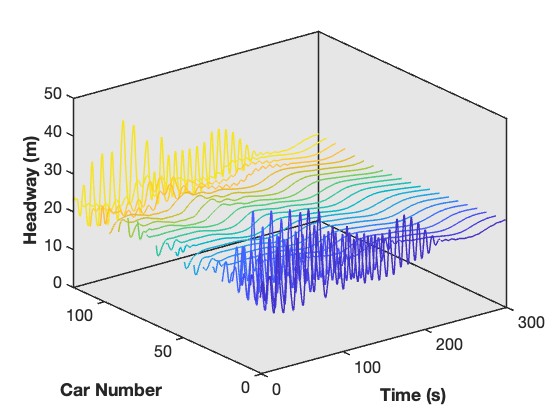}}
    \subfloat[$N=8$; $40$ HDVs]{\includegraphics[width=0.35\textwidth]{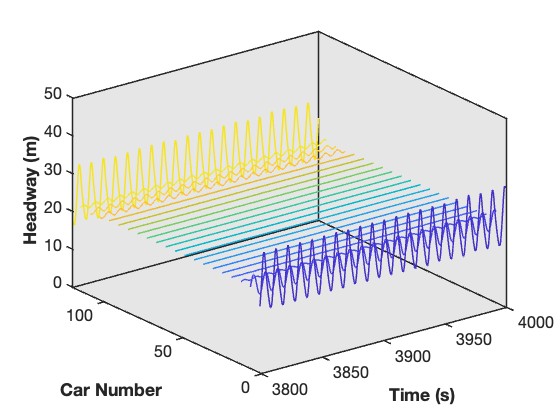}}
    \caption{Headway plots for segregated traffic with CAV platoons of size $N=6$ with $30$ and $36$ HDVs and CAV platoons of size $N=8$ with $32$ and $40$ HDVs.}
    \label{sim2-1-1h}
\end{figure}

From this simulation, we observe that platoons of size $N=6$ can stabilize up to $30$ HDVs, slightly less than $32$ HDVs stabilized by platoons of size $N=8$.(It is worth noting that with 40 HDVs, the traffic flow is nearly stable.) Moreover, if the model does not reach equilibrium, the speed variation of HDVs is smaller when they are positioned closer to the tail CAV of the platoons, suggesting that HDVs are also prone to string instability.

\subsubsection{Evenly mixed CAV platoons and HDVs}
Another appraoch to distributing the vehicles is to mix CAV platoons and HDVs as evenly as possible, i.e. each platoon is follow by a fixed number of HDVs. We again assume that the CAV platoons are not connected, as the distance between platoon leaders is longer than flows of only CAV platoons, which could result in increased delays. Figure \ref{sim2-2-1h} is the headway plots for evenly mixed traffic, where CAV platoons of size $N=6$ are followed by $2$ or $3$ HDVs, and CAV platoons of size $N=8$ are followed by $5$ or $6$ HDVs.
\begin{remark}
    One of the platoons may be followed by a different number of HDVs to balance the distribution.
\end{remark}
\begin{figure}[ht]
    \centering
    \subfloat[$N=6$; $2$ HDV followers]{\includegraphics[width=0.35\textwidth]{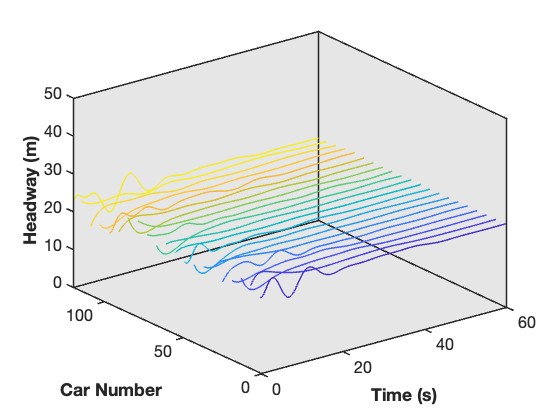}}
    \subfloat[$N=6$; $3$ HDV followers]{\includegraphics[width=0.35\textwidth]{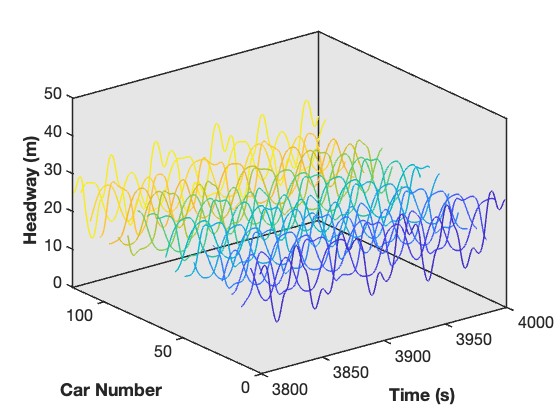}}
    \\
    \subfloat[$N=8$; $5$ HDV followers]{\includegraphics[width=0.35\textwidth]{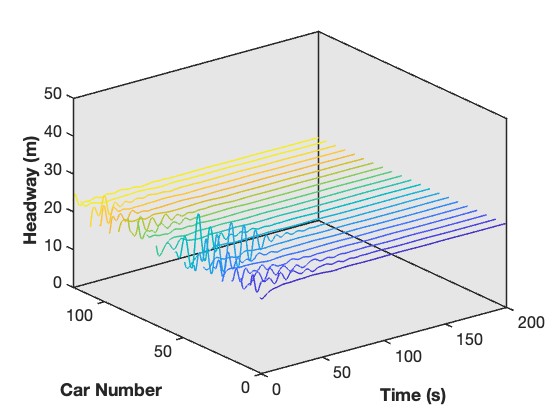}}
    \subfloat[$N=8$; $6$ HDV followers]{\includegraphics[width=0.35\textwidth]{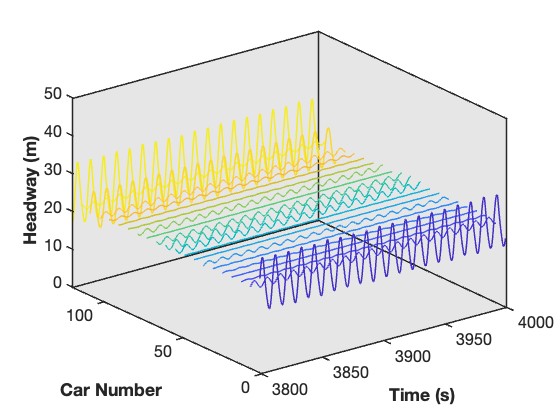}}

    \caption{Headway plots for evenly mixed traffic with each CAV platoon of size $N=6$ followed by $2$ or $3$ HDVs, and CAV platoon of size $N=8$ followed by $5$ or $6$ HDVs.}
    \label{sim2-2-1h}
\end{figure}

From this simulation, we observed that in evenly mixed distributions, platoons of size $N=6$ can stabilize up to $30$ HDVs, while platoons of size $N=8$ can stabilize up to $48$ HDVs. This suggests that larger platoons act as more effective controllers of traffic stability. However, in scenarios with segregated distributions of $N=8$ with $40$ HDVs, the traffic flow is nearly stable, indicating that the improvements provided by the even distribution are relatively minor. where stability is not fully achieved, the speed variation among HDVs decreases more significantly when they are positioned closer to the platoons, consistent with previous observations.

\FloatBarrier

\section{Conclusions \label{sec25}}
In this paper, we have extended a recently proposed single-platoon CF model to accommodate multiple platoons. By prioritizing the leading vehicle of each platoon, we proposed two models that account for varying degrees of connectivity between platoons. We showed that our proposed multi-platoon models are consistent with the foundational CF models when the platoon size is reduced to $1$. 

Through linear stability analysis, we demonstrated that both platoon size and the level of inter-platoon communication can enhance system stability. The results of numerical experiments with varied platoon size and connectivity are consistent with theoretical analysis. Furthermore, when testing configurations that mixed CAV platoons with HDVs, we observed that HDVs benefit from following CAV platoons, even without specific design considerations for HDV control. This reveals a more intrinsic stabilizing effect of structured platooning. 

A notable outcome of our analysis is that the influence of inter-platoon connections diminishes as platoon sizes increase. This occurs because larger platoons act as “self-stabilizing units”, where intra-platoon V2V control is sufficient to maintain stability, reducing the need for external coordination. This suggests that in practical deployment, prioritizing robust intra-platoon V2V communication is more effective than improving inter-platoon V2I links, especially in high-density traffic scenarios. Another finding is that the stability of mixed traffic flow exhibits similar characteristics in scenarios where CAV platoons are evenly distributed or segregated, a result that consists with a study of macroscopic models of mixed flow \cite{hui2024anisotropic}.

This paper provides a solid foundational structure for future innovations in CAV technologies and opens several avenues for further exploration. Integration with other control strategies, such as feedback and optimal control, could significantly enhance stability, safety, and comfort for travelers. Additionally, addressing fairness within the model and considering dynamic leader switching and platoon reformation could lead to more practical and equitable applications. Extending the proposed platoon models to accommodate more complex traffic scenarios, such as multi-lane roads and signalized intersections, would broaden the models' applicability. 

\bibliographystyle{unsrt}
\bibliography{raf}

\end{document}